\newcommand{\standardfield}[1]{\mathbb{#1}}
\newcommand{\Q}{\ensuremath{\standardfield{Q}}}
\newcommand{\R}{\ensuremath{\standardfield{R}}}
\newcommand{\Z}{\ensuremath{\standardfield{Z}}}
\newcommand{\field}[1]{\mathcal{#1}}
\newcommand{\fK}{\ensuremath{\field{K}}}
\newcommand{\mat}[1]{\mathsf{#1}}
\newcommand{\mB}{\mat{B}}
\newcommand{\ring}[1]{\mathcal{#1}}
\newcommand{\Or}{\ensuremath{\ring{O}}}
\newcommand{\OrD}{\ensuremath{\Or_{\Delta}}}
\newcommand{\Cl}{\text{Cl}} %ToDo: Richtige Schrift
\newcommand{\ClD}{\Cl_{\Delta}}
\newcommand{\RR}{R^+}%Regulator here narrow sense
\newcommand{\AppRR}{R'}%Regulator approximation
\newcommand{\RRn}{R}%Regulator non narrow sense
\newcommand{\Log}{\operatorname{Log}}
\newcommand{\AppLog}{\operatorname{\tilde{\delta}}}
\newcommand{\lattice}[1]{#1}
\newcommand{\latL}{\ensuremath{\lattice{\Lambda}}}
\newcommand{\latDualL}{\ensuremath{\lattice{\Lambda^*}}}
\renewcommand{\vector}[1]{\mathbf{#1}}
\newcommand{\vx}{\vector{x}}
\newcommand{\vy}{\vector{y}}
\newcommand{\lmapdef}[5]{#1:\,#2\to{}#3\,:\,#4\mto{}#5}
\newcommand{\mto}{\longmapsto}
\newcommand{\suchthat}{\mid}
\newcommand{\setdef}[2]{\{\,#1\suchthat{}#2\,\}}
\newcommand{\set}[1]{\mathcal{#1}}
\newcommand{\card}{\operatorname{card}}
\newcommand{\order}{\operatorname{order}}
\newcommand{\sI}{{\set{I}}}
\newcommand{\sM}{{\set{M}}}
\newcommand{\sP}{\set{P}}
\newcommand{\sR}{\set{R}}
\newcommand{\sX}{\set{X}}
\newcommand{\sY}{{\set{Y}}}
\newcommand{\RedFormA}{\sR^+} %All principal reduced forms with a>0
\newcommand{\form}[1]{#1}
\newcommand{\fg}{\form{g}}
\newcommand{\fh}{\form{h}}
\newcommand{\ftx}[1]{\ensuremath{\form{\tilde{g}_{#1}}}}
\newcommand{\ftxe}[2]{\ensuremath{\form{\tilde{g}}_{(#1, #2)}}}
\providecommand{\abs}[1]{\lvert #1 \rvert}
\newcommand{\mabs}[1]{\ensuremath{\left|#1\right|}}
\newcommand{\ceil}[1]{\left\lceil #1 \right\rceil}
\newcommand{\floor}[1]{\lfloor #1 \rfloor}
\newcommand{\alg}[1]{\textsc{#1}}
\newcommand{\AlgDualLatticeReg}{\ensuremath{\alg{Regulator-Dual}}}
\newcommand{\AlgReg}{\ensuremath{\alg{Regulator}}}
\newcommand{\AlgDualLatticeORQ}{\ensuremath{\alg{AlgPIP-Dual}}}
\newcommand{\AlgHIP}{\alg{AlgPIP}}
\newcommand{\AlgDualLatticeRQ}{\ensuremath{\alg{SampleDual-RQ}}}
\newcommand{\fReg}{\ensuremath{\operatorname{Reg}}}
\newcommand{\fORQ}{\ensuremath{\operatorname{PIP}}}
\newcommand{\alginput}{\textbf{Input: }}
\newcommand{\algoutput}{\textbf{Output: }}
\newcommand{\ideal}[1]{\mathfrak{#1}}
\newcommand{\ia}{\ensuremath{\ideal{a}}}
\newcommand{\ib}{\ensuremath{\ideal{b}}}
\newcommand{\iax}[1]{\ensuremath{\ia_{-}(#1)}}
\newcommand{\iatx}[1]{\ensuremath{\tilde{\ia}_{-}(#1)}}
\newcommand{\qto}[1]{~\stackrel{#1}{\longrightarrow} ~}
\newcommand{\ket}[1]{\ensuremath{|#1\rangle}}
\newcommand{\mmod}{~\operatorname{mod}~}
\newcommand{\mset}[1]{\ensuremath{\{#1\}}}
\newcommand{\dist}{\ensuremath{\operatorname{dist}}}
\DeclareMathOperator{\polylog}{polylog}
\newcommand{\nextint}[1]{\ensuremath{\lfloor #1 \rceil}}
\newcommand{\bnextint}[1]{\ensuremath{\left\lfloor #1 \right\rceil}}
\newcommand{\cmmax}{m_{max}}
\newcommand{\cmmin}{m_{min}}
\begin{document}

\title{Quantum Algorithms for  many-to-one Functions to Solve the Regulator and the Principal Ideal Problem}

\author{Arthur Schmidt}
\institute{Department of Computer Science,  University of Calgary
  \newline 2500 University Drive NW, Calgary, Alberta, Canada T2N 1N4
  %\newline \email{aschmidt@ucalgary.ca}  %optional
	}

\maketitle

\begin{abstract}
  \noindent We propose new quantum algorithms to solve the regulator and the
  principal ideal problem in a real-quadratic number field. 
  We improve the algorithms proposed by Hallgren
  (\cite{hallgren:2002}, \cite{hallgren:2007}) by using two different techniques.
  The first improvement is the usage of a period function which is not 
  one-to-one on its period. We show that even in this case Shor's
  algorithm computes the period with constant probability.
  The second improvement is the usage of reduced forms $(a, b, c)$
  of discriminant $\Delta$ with $a>0$ instead of reduced ideals of the same discriminant.
  These improvements reduce the number of required qubits by at least $2 \log \Delta$. 
\end{abstract}

%% start the paper here:
\section{Introduction}\label{S:Intro}
Quantum algorithms can be used to achieve a sub-exponential or even exponential speed-up over
known classical algorithms for some mathematical problems by using Shor's quantum framework.
Shor's algorithms for factoring and solving the discrete logarithm problem \cite{shor:1994} have 
been adapted to different problems. The computation of the regulator (Regulator Problem) of a 
real-quadratic number field and the solution of the principal ideal problem (PIP) are two examples of such adaptions.
Classically, these problems can be solved in sub-exponential time assuming the generalized
Riemann hypothesis (GRH).
For the quantum world, polynomial time algorithms were proposed by Hallgren in \cite{hallgren:2002}.

Regulator computation and the PIP are interesting problems not only from
a pure mathematical point of view. In \cite{buchmann/williams:1989a}, Buchmann and Williams proposed
a Diffie-Hellman-like cryptosystem which security is based on PIP. Thus, if we could solve the PIP,
we can break the cryptosystem from \cite{buchmann/williams:1989a}.

The regulator computation differs from all the other settings where Shor's 
algorithm can be applied. It operates on a  structure (the infrastructure of principal reduced ideals)
which is not a group, since it lacks the associativity. However, Hallgren showed that 
Shor's algorithm can still be used in this case. 

RP and PIP require the computation of natural logarithms. Thus, one problem 
which arises during these computations is the choice of the right approximation of natural logarithms. 
There is no known way to choose the approximation a priori 
for a given number field. Thus, the functions proposed in \cite{hallgren:2002} cannot
be computed in polynomial time. This problem was solved by Schmidt and Vollmer 
in \cite{schmidt/vollmer:2005} by using non-canonical number theoretic constructions, 
and by Hallgren himself in \cite{hallgren:2007}, by defining functions which are
periodic only on a subset of the possible function values. In our paper we
show that this problem can also be solved by using functions which are always periodic but
are many-to-one on their fundamental period. We show that Shor's framework computes the right 
period even in such a case with constant success probability.
We obtain a Monte Carlo type algorithm which does not depend on GRH.

The problem to compute the period of a function which is not one-to-one was
first addressed by Boneh and Lipton in \cite{boneh/lipton:1995}.
The authors presented algorithms for  functions in $\Z$ which have integer periods. 
In \cite{mosca/ekert:1999}, Mosca and Eckert generalized this result for finitely generated 
Abelian groups for some restricted class of functions.
In \cite{hales/hallgren:2000} and \cite{hales-thesis:2002}, these restrictions were eliminated
by Hales and Hallgren. In our paper, we solve
this problem for certain many-to-one functions whose periods are irrational.

There are two equivalent languages which can be used to describe elements of and problems 
in quadratic number fields. The first is the language of ideals, which
is usually used for formal definitions of the underlying concepts and elements of a number field.
The second is the language of quadratic forms, which is used to describe algorithms and carry out computations. 
In this paper, we will use both languages in exactly such a way.

Our contribution in this paper is the following. We present more efficient versions of
algorithms for computing the regulator and solving the PIP. Since the PIP problem is 
a basis for a cryptosystem, it gives us a better tool to compare this cryptosystem to others
in their resistance against quantum attacks. Thus, we can make a better choice which
cryptosystem should be used if we assume that quantum computers of a certain size can be build (\cite{schmidt:2006}, \cite{schmidt-thesis:2007}).
The second contribution are examples for problems which solution can be improved by using
functions which are not one-to-one on their fundamental periods.
In this paper, we do not do a full analysis of the number of qubits for the presented
algorithms. Instead, we only reduce the complexity of certain parts of the known algorithms.
A first complete analysis of the algorithms was presented in \cite{schmidt-thesis:2007}. 
We will improve this analysis by using more efficient algorithms in a subsequent paper.

Our paper is organized as follows. 
In the next section, we give a short overview of the quantum framework. 
In section 3, we present the necessary background from number theory. 
In section 4, we describe a quantum algorithm for computing the regulator of a given number
field. In section 5, we present an algorithm for solving the principal ideal problem.
We summarize our results and describe open problems in the last section.

\section{Quantum Computing Background}
Many polynomial time quantum algorithms that solve problems for which only sub-exponential or
even exponential classical algorithms are known use the (inverse) quantum Fourier transform (QFT)
as a subroutine. 
The problems in this class can be reduced to the problem of finding a basis for a period
lattice $\latL$ of an appropriate function\footnote{We say that a function $f:\R^n\rightarrow S$
has a period lattice $\latL\subset\R^n$ if $\latL$ is a lattice and $f(x)=f(x+\lambda)$
for all $\lambda\in\latL$}. 
For example, Shor's factoring algorithms computes the factors of an integer $n$ by determining the 
period of the function $f(x) = a^x \mod n$, with $1 < a < n $. The period of $f$ is the order of 
$a$ in the finite abelian group $(\Z/n\Z)^*$, and the corresponding lattice is $\order(a)\Z$.
The objective of the quantum subroutine is to find an approximation of
a basis $\mB$ for the dual lattice $\latDualL$. During the classical post-computation step the basis $\mB$
is used to compute a basis of the original lattice $\Lambda$. The latter task can be done 
by using a continued fraction expansion as proposed in Shor's original paper \cite{shor:1994},
by using a simultaneous Diophantine approximation as proposed by Seifert in \cite{seifert:2001}, or 
by using techniques by Buchmann and Pohst (\cite{buchmann/pohst:1987}, \cite{buchmann/kessler:1993}) as
proposed by Hallgren in \cite{hallgren:2005} and Schmidt and Vollmer in \cite{schmidt/vollmer:2005}.

The framework for such an algorithm is the following. The quantum computer uses two registers: one to store
the input vector of the function and one to store the function value. The algorithm starts by creating a 
superposition of all possible states in the first register, by computing the function value to the second register,
and by measuring the second register. By the laws of quantum mechanics, the measurement changes the state of the quantum
computer to $\sum_{v\in L}|k+v\rangle |f(k)\rangle$ where $k$ is a random vector
and $L$ is a subset of $\Lambda$. Next, the QFT and a measurement is applied to the first register.
One useful property of the set computed by the QFT is that it is independent of the coset $k+\Lambda$. Thus, QFT always
creates a superposition of values which approximate the basis of $\Lambda^*$ independent of $k$. 
The other useful property of the QFT is that the elements in the superposition are almost uniformly distributed.
These two properties imply that, for a fixed dimension of the lattice, an approximation of the basis $\mB$ 
is computed with a constant probability after running the above algorithm a constant number of times.

In the following sections, we will define periodic functions whose period lattices can be used to compute
the regulator and to solve the PIP resp. DL-problem.

\section{Number Theory Background}
\subsection{Ideals}
  Let $\Delta$ be a positive integer which is not a square such that $\Delta\equiv 0, 1 \mod 4$. 
  Then the module $\OrD=\Z+\frac{\Delta+\sqrt{\Delta}}{2}\Z$ is a real-quadratic order.
  The field of fractions of the order $\OrD$ is the real-quadratic field $\fK=\Q(\sqrt{\Delta})$. 
  An element $\alpha\in\Q(\sqrt{\Delta})$ can be written as $\alpha=a+b\sqrt{\Delta}$ with $a, b\in\Q$. The
  norm of $\alpha$ is $N(\alpha)=a^2-b^2\Delta$.

  Let $\sX$ and $\sY$ be two subsets of $\fK$, then the product $\sX\sY$ is the additive subgroup of $\fK$
  generated by $\setdef{xy}{x\in \sX, ~ y\in \sY}$. An integral $\OrD$-ideal is a module $\ia\subseteq\OrD$ such that
  $\ia\OrD\subseteq\ia$. A (fractional) ideal $\ia$ is a subset of $\fK$ such that $d\ia$ is a integral
  ideal for a $d\in\Z$. An ideal $\ia$ is invertible, if there exists an ideal $\ib$ with
  $\ia\ib=\OrD$. By $\sI$, we denote the set of invertible ideals.
  
  Each ideal $\ia$ has the form  $$\ia= q (a\Z + \frac{b+\sqrt{\Delta}}{2}\Z),$$  
  where $a,b\in\Z$, $q\in\Q$, $a,q>0$, $b$ is unique modulo $2a$, $c=(b^2-\Delta)/(4a)\in\Z$, 
  and $\gcd(a,b,c)=1$. The ideal is called reduced, if $a>0$ and
  $\abs{\sqrt{\Delta}-2\abs{a}}<b<\sqrt{\Delta}$. By $\sR$ we denote the set of reduced ideals.
  
  Two ideals $\ia$ and $\ib$ are equivalent if there is $\alpha\in\fK$ such that $\ib=\alpha\ia$.
  The set of equivalence classes of ideals forms a finite abelian group under ideal multiplication.
  We denote this group by $\ClD$ We have $\ClD = \sI/\sP$, where $\sP=\{\alpha\OrD ~|~ \text{with $\alpha\in\fK$}\}$
  is the set of principal ideals.  

  Every ideal $\ia$ is equivalent to a reduced ideal. The equivalent reduced ideal can be computed by applying
  the reduction operator $\rho(\ia)= \gamma \ia$, with $\gamma=-2c/(q(b+\sqrt{\Delta}))$, at most $\log_2(a/\sqrt{\Delta}) + 2$ times.

  By theorem of Dirichlet, every unit of $\OrD$ can be written as $\pm \epsilon^k$ with an integer $k$ and
  a fundamental unit $\epsilon$. It is easy to see that the norm of every unit is equal to plus or minus 
  one.\footnote{Note that there is well know connection between fundamental units
    and solutions of the famous Pell equation (see \cite{jacobson/williams:2009} for more information about it).}
  In general, the number of bits which are necessary to represent a unit is exponential (in $\log\Delta$). Thus,
  instead of computing a fundamental unit $\epsilon$ we compute the regulator defined as $\RRn = \ln |\epsilon|$.
  If we confine ourself to units with norm plus one, then there is a fundamental unit $\epsilon'$ of norm one 
  such that every unit of norm one has the form $\pm(\epsilon')^k$. In this case, $\RR=\ln |\epsilon'|$ is called the regulator in
  the narrow sense. Note that in a number field either $\RRn=\RR$ or $\RRn=\RR/2$. 
  In our computations we will only consider the narrow case. 

  Principal ideals can be ordered on a circle of circumference $\RRn$ by using the distance function
  $\delta: \sP \rightarrow \R/\RRn Z~:~\alpha\OrD \mapsto \Log \alpha$ with 
  $\Log \alpha = \frac{1}{2}\ln |\sigma(\alpha)/\alpha| \mod \RRn$. Note that the unit ideal has distance zero.
  The distance between two ideals $\ia$ and $\ib$ is defined by \mbox{$\delta(\ia, \ib)=\delta(\ia)-\delta(\ib) \mod \RR$.}
  It has two important properties: $1/\sqrt{\Delta}<\delta(\ia, \rho(\ia))< \ln\sqrt{\Delta}$ and
  $\delta(\ia, \rho(\rho(\ia)))>\ln 2$ for all reduced ideals $\ia$. 
  There is a minimal positive integer $k$ such that the sequence $(\OrD,~ \rho(\OrD),~\ldots,~ \rho^k(\OrD)=\OrD)$ contains all
  principal reduced ideals. Thus, by applying $\rho$ we can ``walk'' through all these ideals. The product of
  all $\gamma$'s which occur during the computation of $\rho$ is a fundamental unit.
  
  For an $x\in\R$ and a principal ideal $\ia=\alpha\OrD$, we define \mbox{$\delta(\ia, x)= x - \Log \alpha \mod \RRn$.}
  Let $\ia\in\sP$ be such that $\delta(\ia, x)\leq 0$ and $\delta(\rho(\ia), x)>0$, then
  we say that the ideal is left of or at $x$ and denote it by $\iax{x}$. 
  The computation of $\iax{x}$ requires the computation of natural logarithms.
	We cannot do this exactly. Moreover, to the best of our
	knowledge, the computation $\Log \alpha $ to any a priori fixed precision does not allow to correctly make
	the decision for some $x$'s whether $\delta(\ia, x)\leq 0$ or $\delta(\ia, x) > 0$. 
	If, however, we successively increase the precision to break a tie, we might spend an
	amount of time on this single computation that exceeds any a priori given polynomial bound for the
	run-time of the total algorithm.\footnote{This is exactly the point where there remains a gap in Hallgren's proof 
	of polynomial run-time of his algorithm for the quadratic case.}
	Therefore, in our algorithms, we only approximate natural logarithms. For an $x\in\Q$, this approach produces 
  some $\iatx{x}$ which is left of or at $x$ according to these approximative logarithm computations.
  We take into account that for some $x$'s $\iax{x}\neq\iatx{x}$.

  In the rest of the section, we consider quadratic forms, show their correspondence to ideals, and describe
  the advantage to use them in our algorithms.

\subsection{Quadratic Forms}
  An integer indefinite quadratic form of discriminant $\Delta$ is a polynomial \mbox{$aX^2 + bXY + cY^2$}, 
  where $a, b, c\in\Z$, $\gcd(a, b, c)=1$, and $\Delta=b^2-4ac>0$. If $\Delta$ is not a square, then the 
  form is irreducible. The form is reduced if $\abs{\sqrt{\Delta}-2\abs{a}}<b<\sqrt{\Delta}$.
  It is easy to see that if $(a, b, c)$ is reduced, then $ac<0$. 

  There is a well known bijection (see \cite{buchmann/vollmer:2007}, Theorem 4.4.4) between invertible ideals and 
  $\Gamma$-Orbits\footnote{A $\Gamma$-Orbit of a form $(a,b,c)$ is the set 
    $\{(a, B, C) ~|~ b\equiv B \mod 2a \text{ and } C=(B^2-\Delta)/4a\}$.}
  of irreducible indefinite forms with positive $a$.
  This bijection maps distances of ideals to distances of forms. 
  Similarly to the ideal case,
  we can ``walk'' on the principal circle by applying the $\rho$-operator to the form $f=(a, b, c)$ which is
  $\rho(f)=(c, B, A)$ such that $B\equiv -b \mod 2c$, $\abs{\sqrt{\Delta}-2\abs{c}}<B<\sqrt{\Delta}$ and 
  $A=(B^2-\Delta)/(4c)$. The difference to the ideal case is that here, the sign of the first coefficient alternates
  whereas in the ideal case it is always positive. 
  In our computations, we use this fact and look at reduced principal forms $(a, b, c)$ left of or at $x$ with the 
  additional condition that $a>0$. We denote the set of reduced principal forms with positive $a$ by $\RedFormA$. 
  The advantage in using forms from $\RedFormA$ over all reduced forms is the following. 
  As mentioned above, the distance between an ideal $\ia$ and $\rho^2(\ia)$
  is at least $\ln 2$. This implies that the distance between two forms from $\RedFormA$ is at least $\ln 2$, too.
  In contrast, the distance between forms in the set of all principal reduced forms is at least $1/\sqrt{\Delta}$.
  Thus, by using $\RedFormA$, we have the property that the minimum distance between two forms is independent of $\Delta$.
  
  In our algorithms, we have to compute forms left of or at $x$ with $x > \Delta$. Since $\delta(\ia, \rho(\ia))<\log\sqrt{\Delta}$,
  the time complexity of this computation is exponential in $\log\Delta$. To ``jump'' over larger distances, we use
  giant steps which consist of form composition and reduction. Let $f=(a, b, c)$ be 
  the composition of two forms (resp. ideals) $f_1=(a_1, b_1, c_1)$ and $f_2=(a_2, b_2, c_2)$. Form $f$
  has coefficients $a=a_1a_2/m$, $b=(j a_2 b_1 + k a_1 b_2 + l (b_1 b_2+\Delta)/2)/m \mmod 2a$, where 
  $j a_2 + k a_1 + l(b_1+b_2)/2=m=\gcd(a_1,a_2,(b_1+b_2)/2)$, and $c=(b^2-\Delta)/(4a)$. Form $f$ is in general not reduced,
  so by applying $\rho$ at most $\log\sqrt{\Delta}+2$ times we obtain a reduced form which is equivalent to
  the composition of $f_1$ and $f_2$. Let $k$ be the number of $\rho$-applications. For the distances, we have the 	
  following equation: 
  \begin{equation}\label{eq:eq1}
  \delta(f_1*f_2)=\delta(\rho^k(f))=\delta(f_1)+\delta(f_2)+\delta',
  \end{equation}
  where $\delta'=\delta(f, \rho^k(f))$ is small (at
  most $\pm\ln\Delta$). An ideal composition followed by a reduction imply a structure which is almost a group
  (since, in general, $\delta'\neq 0$ it is not exactly a group), we call it the infrastructure (see 
  \cite{lenstrahw:1982}, \cite{buchmann/vollmer:2007}, or \cite{jacobson/williams:2009} for more details).

  In our algorithm we compute the form $\ftx{x/4}$ left of or at
  $x\in(1/4)\Z$ using an approximate logarithm computation. 
  This can be done as follows. Let $\fg$ be the unit form. We first compute the form 
  $\fh= \rho(\rho(\fg))$. We know that $\delta(\fg, \fh)>\ln 2$. 
  Thus, we can use a square-and-multiply method to compute the form $\ftx{x/4}$. 
  We need to estimate the number of operations (squares, multiplications, reduction) to
  determine a necessary logarithm precision. Since in our  algorithms $x<\Delta^2$, the
  number of squares and multiplication is at most $ 2(2\log_2 \Delta + 2)$. Each square 
  and multiplication is followed by $\log\sqrt{\Delta}+2$
  reductions. Therefore, the total number of operations is at most $(c\log_2 \Delta)$,
  where $c<10$ is a constant.
  If we choose the precision of each logarithm computation to be at least 
  $1/(8c\log\Delta)$, then, by (\ref{eq:eq1}),  we obtain $|\delta(\ftx{x/4}) -\AppLog (\ftx{x/4})|<1/8$,
 	where $\AppLog$ is the approximation of $\delta$ computed by the above algorithm.
	This approximation is required in the subsequent sections. 
  The computation of $\ftx{x/4}$ can be done in
  time polynomial in $\log\Delta$, since all the computations (square, multilication,
  reduction, and 	logarithm evaluations with the necessary precision) can be done in 
	polynomial time.

\section{Computing the Regulator}
In this section we solve the regulator problem which is defined as follows.
\begin{definition}[Regulator Problem]
  Given $\Delta$, find an integer $\AppRR$ with $| \AppRR-\RR |<1$ where $\RR$ is the regulator of $\Q(\sqrt{\Delta})$.
\end{definition}
We first give the definition of the periodic function for computing the regulator.
\begin{definition}
  Fix an algorithm $\tilde{\ln}$ for computing an approximation of the natural algorithm.
  The function
  $$\lmapdef{\fReg}{\Z}{\RedFormA}{x}{\ftx{x/4}}$$
  maps an integer $x$ to the principal reduced form $\ftx{x/4}=(a, b, c)$, $a>0$, such that,
  with respect to $\tilde{\ln}$, $\ftx{x/4}$ is left of or at $x/4$. 
  The precision of $\tilde{\ln}$ must be chosen such that, for all $x$, $| \delta(\ftx{x/4}) -\AppLog (\ftx{x/4})|<1/8$,
  where $\AppLog$ is the approximation of $\delta$ which uses $\tilde{\ln}$ instead of $\ln$. 
\end{definition}

   In the next two lemmas, we will show that $\fReg$ is periodic. In Lemma \ref{lem:nof:fRegB},
   we will show that for every $\fg\in\RedFormA$ there are areas of successive integers
   in every period of $\fReg$ which are all mapped to $\fg$,
   that the number of integers in these areas is at most $\ln\Delta+3$, and that this number differs by at most 4
   in different periods.
   In Lemma \ref{lem:nof:fRegA}, we will show that the areas are non-empty and the first element
   occurs with a period $\approx 4\RR$

\begin{lemma} \label{lem:nof:fRegA}
  For every $\fg\in\RedFormA$, there is a $y=4\delta(\fg) + 1/2$ such that
  \begin{equation*}
    \forall k\in\Z. \exists \epsilon\in\R, \abs{\epsilon}\leq 1. 
    (x=y+4k\RR+\epsilon\in\Z,~ \fReg(x)= \fg, \text{ and } \fReg(x-1)=\rho^{-2}(\fg)).
  \end{equation*}
\end{lemma}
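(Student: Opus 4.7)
The plan is to produce, for each $k \in \Z$, an integer $x_k$ within distance $1$ of $y + 4k\RR$ satisfying both $\fReg(x_k) = \fg$ and $\fReg(x_k - 1) = \rho^{-2}(\fg)$. The argument will use two quantitative ingredients already in the setup: the pointwise error bound $|\AppLog(\fh) - \delta(\fh)| < 1/8$ guaranteed by the chosen precision of the approximate logarithm, and the infrastructure gap $\delta(\rho^2(\fh)) - \delta(\fh) \geq \ln 2 > 1/2$ between consecutive elements of $\RedFormA$.

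First, I would derive two sufficient conditions from these facts. Since $\fReg(x)$ returns the form $\fh$ with $\AppLog(\fh) \leq x/4 < \AppLog(\rho^2(\fh))$, and each $\AppLog$ differs from the corresponding true $\delta$ by less than $1/8$, one obtains: if $x/4 \bmod \RR$ lies in $[\delta(\fg) + 1/8,\,\delta(\rho^2(\fg)) - 1/8)$ then $\fReg(x) = \fg$, and if it lies in $[\delta(\rho^{-2}(\fg)) + 1/8,\,\delta(\fg) - 1/8)$ then $\fReg(x) = \rho^{-2}(\fg)$. The $\ln 2$ gap keeps both intervals of positive length and guarantees that neither depends on the particular approximation the algorithm happens to produce.

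Now fix $k \in \Z$ and set $x^* = \lceil y + 4k\RR \rceil$, so $x^* - (y + 4k\RR) \in [0,1)$. Then $x^*/4 \bmod \RR$ lies in $[\delta(\fg) + 1/8,\,\delta(\fg) + 3/8)$, inside the first safe interval, so $\fReg(x^*) = \fg$. Next, $(x^* - 1)/4 \bmod \RR$ lies in the boundary window $[\delta(\fg) - 1/8,\,\delta(\fg) + 1/8)$, which sits outside both safe intervals; however, the same error-plus-gap bounds still force $\fReg(x^* - 1) \in \{\rho^{-2}(\fg), \fg\}$, because any further form would contradict the $\ln 2$ gap. If $\fReg(x^* - 1) = \rho^{-2}(\fg)$, take $x_k = x^*$. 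Otherwise $\fReg(x^* - 1) = \fg$, and then $(x^* - 2)/4 \bmod \RR$ lies in $[\delta(\fg) - 3/8,\,\delta(\fg) - 1/8)$, which (again using $\ln 2 > 1/2$) sits inside the second safe interval, so $\fReg(x^* - 2) = \rho^{-2}(\fg)$; in this case take $x_k = x^* - 1$. In either case $\epsilon := x_k - y - 4k\RR \in [-1,1)$, so $|\epsilon| \leq 1$.

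The main technical subtlety is that $\AppLog$ is not a canonical function of its form argument: the square-and-multiply procedure described in the preceding section may produce slightly different accumulated approximations of $\delta(\fg)$ for different $x$, so the exact integer at which $\fReg$ switches from $\rho^{-2}(\fg)$ to $\fg$ cannot be pinned down a priori. The two-case split above is precisely what absorbs this ambiguity, and the quantitative margin that makes it work is $\ln 2 > 1/2$, i.e., the true gap between consecutive forms strictly exceeds the worst-case sum of two $1/8$-errors plus the $1/4$ quantization that comes from stepping $x$ by unity.
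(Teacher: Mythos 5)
Your proof is correct and takes essentially the same route as the paper's: both arguments pin down the integers adjacent to the nominal transition point near $4(\delta(\fg)+k\RR)$ using the $1/8$ approximation bound together with the gap $\delta(\rho^2(\fg))-\delta(\fg)>\ln 2>1/2$, and then resolve the single ambiguous integer by a two-case split. Your explicit ``safe interval'' formulation and the choice $x^*=\lceil y+4k\RR\rceil$ (versus the paper's integer nearest to $4(\delta(\fg)+k\RR)$, with the split performed on the middle of the three consecutive values) are only cosmetic differences.
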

\begin{proof}
   Let $\fg\in\RedFormA$, $y=4\delta( \fg ) + 1/2$, $k\in\Z$, and $x\in\Z$, 
   such that $x/4 = \delta( \fg ) + k\RR + \delta$ with $-1/8 \leq \delta < 1/8$
 
   From $\ln 2 < \delta(\rho^2(\fg)) - \delta(\fg)$, we obtain
   $\AppLog (\rho^2(\fg)) - \AppLog (\fg) > \ln 2 - 1/4 > 1/4$. That means
   that for every $\fg$ there is at least one $x$ in each period with $\fReg(x)=\fg$
   and the period lattice of $\fReg$ has no gaps.
     
   Now assume $-1/8 \leq \delta \leq 0$. In this case we have
   $x/4 \leq \delta( \fg ) + k\RR \leq x/4+1/8$ and therefore
   $x/4-1/8 < \AppLog (\fg) + k\RR < (x+1)/4$. This implies that
   $\fReg(x-1)=\rho^{-2}(\fg)$, $\fReg(x+1)=\fg$, and
   $\fReg(x) \in \{ \rho^{-2}(\fg), \fg\}$. If $\fReg(x)=\rho^{-2}(\fg)$, then 
   $|x-y-4k\RR|\leq 1/2$. If $\fReg(x)=\rho^{-2}(\fg)$, then $|(x+1)-y-4k\RR|\leq 1$.
   Thus in both cases the $\epsilon$, as defined in the lemma, exists.
   
   The case $0<\delta<1/8$ is analogous.  \qed
\end{proof}

\begin{lemma} \label{lem:nof:fRegB}
  Let $\Delta$ be a discriminant of a real-quadratic number whose
  regulator $\RR$ is greater than $5\ln\Delta$.
  Let $\fg\in\RedFormA$, $y=4 \delta( \fg ) + 1/2$, $k\in\Z$, and $\epsilon_{(\fg,k)}\in\R$,
  be defined as in the last lemma.
  Then there exists an $m_{(\fg,k)}\in\Z$, $1\leq m_{(\fg,k)} < \ln\Delta+3$, 
  such that the following is true:
  \begin{enumerate}
    \item $\fReg(y+4k\RR+\epsilon_{(\fg,k)}+m_{(y,k)}+1)=\rho^2(\fg).$
    \item $\fReg(y+4k\RR+\epsilon_{(\fg,k)}+m)=\fg$, 
      for all $m\in\Z$, $0\leq m \leq m_{(\fg,k)}$
    \item $\max_{k, k'\in\Z} \abs{m_{(\fg,k)}-m_{(\fg,k')}} \leq 4.$
  \end{enumerate}
\end{lemma}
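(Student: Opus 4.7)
The plan is to characterize explicitly the run of consecutive integers on which $\fReg$ takes the value $\fg$, and then to read $m_{(\fg,k)}$ off that description. Since a single application of $\rho$ to a form in $\RedFormA$ flips the sign of the leading coefficient and therefore leaves $\RedFormA$, the form in $\RedFormA$ immediately succeeding $\fg$ is $\rho^2(\fg)$. Consequently $\fReg(x)=\fg$ exactly when $x$ is an integer lying in the real interval $[4(\AppLog(\fg)+k'\RR),\, 4(\AppLog(\rho^2(\fg))+k'\RR))$ for some $k'\in\Z$, and $\fReg^{-1}(\fg)$ is a disjoint union of consecutive integer blocks, one per period.

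By Lemma~\ref{lem:nof:fRegA}, $x_0 := y+4k\RR+\epsilon_{(\fg,k)}$ is the first integer of the $k$-th such block, since $\fReg(x_0)=\fg$ but $\fReg(x_0-1)=\rho^{-2}(\fg)$. I set $m_{(\fg,k)} := \lfloor 4\AppLog(\rho^2(\fg))+4k\RR\rfloor - x_0$, the offset from $x_0$ to the last integer of the block. Condition~(2) then holds because the block is a consecutive integer range, and condition~(1) follows from the maximality of $m_{(\fg,k)}$: the integer $x_0+m_{(\fg,k)}+1$ is the first integer of the next block, which corresponds to $\rho^2(\fg)$.

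Write $L := \AppLog(\rho^2(\fg))-\AppLog(\fg)$; by the $1/8$ precision guarantee on $\AppLog$, $|L-(\delta(\rho^2(\fg))-\delta(\fg))|<1/4$, and the block cardinality equals $\lfloor 4L\rfloor$ or $\lfloor 4L\rfloor+1$ depending on how the real interval aligns with the integer lattice. For the lower bound $m_{(\fg,k)}\geq 1$ (block of at least two integers), I will combine the estimate $4L > 4\ln 2 - 1$ from the proof of Lemma~\ref{lem:nof:fRegA} with the fact that $x_0$ is located at the left end of the block. The upper bound $m_{(\fg,k)} < \ln\Delta+3$ reduces to bounding $\delta(\fg,\rho^2(\fg))$ for consecutive forms in $\RedFormA$, and I expect extracting the exact constant to be the main technical obstacle: one must use a sufficiently sharp form of the infrastructure distance bound rather than the loose estimate obtained by merely iterating $\delta(\ia,\rho(\ia)) < \ln\sqrt\Delta$.

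Finally, condition~(3) follows from the observation that $m_{(\fg,k)}$ depends on $k$ only through the fractional part of $4k\RR$, since the real length $4L$ of the block is $k$-independent; each of the two endpoint roundings therefore contributes at most $\pm 1$ to $m_{(\fg,k)}$. This already yields $|m_{(\fg,k)}-m_{(\fg,k')}|\leq 2$, well within the claimed bound of $4$.
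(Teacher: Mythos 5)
The main gap is the upper bound $m_{(\fg,k)}<\ln\Delta+3$: you explicitly defer it as ``the main technical obstacle,'' but it is part of the statement, so your proposal does not prove the lemma. For comparison, the paper disposes of it in one line, using exactly the iterated bound you dismiss as too loose, namely $\delta(\rho^2(\fg))-\delta(\fg)\le\ln\Delta$, together with $\RR>5\ln\Delta$, to conclude $\fReg(x_0)\neq\fReg(x_0+\ceil{\ln\Delta}+2)$. Note, however, that your unease about the constant points at something real: since $\fReg(x)$ is the form left of or at $x/4$, the block you describe contains roughly $4L\approx 4\,\delta(\fg,\rho^2(\fg))$ integers, so the bound $\delta(\fg,\rho^2(\fg))\le\ln\Delta$ only caps the block at about $4\ln\Delta$ integers, not $\ln\Delta+3$; neither your sketch nor the paper's one-line argument accounts for this quarter-step factor, and there is no sharper standard infrastructure estimate you can simply cite to recover $\ln\Delta+3$. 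In other words, you cannot close this hole by appealing to a ``sufficiently sharp form of the distance bound'' that you do not exhibit.

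Second, the lower bound $m_{(\fg,k)}\ge 1$ does not follow from your sketch: a half-open interval of length $4\ln 2-1\approx 1.77$ may contain a single integer, so ``$4L>4\ln2-1$ plus $x_0$ at the left end of the block'' only yields $m_{(\fg,k)}\ge 0$; to guarantee two integers you need the effective block length, after subtracting all rounding and approximation slack, to be at least $2$, and $\delta(\fg,\rho^2(\fg))>\ln 2$ alone does not give that. (The paper's own proof quietly establishes only $0\le m_{(\fg,k)}$ in its first sentence.) Finally, your ``exact'' description of $\fReg^{-1}(\fg)$ as the integers in $[4(\AppLog(\fg)+k'\RR),\,4(\AppLog(\rho^2(\fg))+k'\RR))$ treats $\AppLog$ as a single number attached to the form; in fact the approximation is recomputed at each input $x$ and may differ between evaluations, which is precisely why the paper carries the input-dependent $\epsilon_{(\fg,k)}$ from Lemma \ref{lem:nof:fRegA} and claims only $|m_{(\fg,k)}-m_{(\fg,k')}|\le4$ rather than your $2$. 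For item (3) this idealization is harmless, since $4$ is all that is required; apart from these points, your block decomposition is the same picture as the paper's very terse argument, which settles (2) and (3) with the single remark that one compares quarter-integers against distances known to precision $1/8$.
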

\begin{proof}
  We first prove the existence of $m_{(y,k)}\in\Z$, $0\leq m_{(y,k)} < \ln\Delta+3$ such that (1) is 
  satisfied. This follows from

  $\delta ( \rho^2(\fg) ) - \delta( \fg) \leq \ln \Delta$ and the assumption that $\RR>5\ln\Delta$,
  which implies
  $\fReg(y+4k\RR+\epsilon_{(y,k)}) \neq \fReg(y+4k\RR+\epsilon_{(y,k)}+\ceil{\ln\Delta}+2)$.
  
  (2) and (3) follow easily from the fact that we look for ideals left of or at a multiple of $1/4$ 
  and the approximation quality of function $\fReg$ is at least $1/8$.   \qed
\end{proof}

Now we present our algorithms. We  first start with the quantum subroutine.
\begin{algorithm}
  \label{AlgDualLatticeReg}
  \caption{\AlgDualLatticeReg}
\begin{flushleft}    
  \alginput{Discriminant $\Delta$, 
            $q$ which is a power of two and $q/2 \leq 5\Delta(\ln\Delta)^2 < q$.}\\
  \algoutput{Approximation of a number from $(q/\RR)\Z$.} 
\end{flushleft}
\begin{enumerate}
  \item (initial state) $\ket{0}, \ket{(1, \Delta \mmod 2)}.$
  \item (create superposition) $\qto{} \frac{1}{\sqrt{q}} \sum_{x=0}^{q-1}\ket{x}, \ket{(1, \Delta \mmod 2)}.$
  \item (compute $\fReg$) $\qto{} \frac{1}{\sqrt{q}}  \sum_{x=0}^{q-1}\ket{x},\ket{\fReg(x)}.$
  \item (measure the second register)
    $$\qto{} \frac{1}{\sqrt{p}}  \sum_{k \in \sM} \sum_{m=0}^{m_{(x',k)}}
    \ket{x'+4\RR k+m+\epsilon_{(x',k)}},\ket{\fReg(x')}$$
    with a random $x'\in\mset{0,\ldots,\floor{4\RR}}$, $\epsilon_{(x',k)}$ and 
    $m_{(x',k)}$ as defined in lemma \ref{lem:nof:fRegB},
    $\sM=\sM_{x'}=\setdef{k\in\Z}{0\leq x'+4\RR k + \epsilon_{(x',k)}< q}$ and 
    $p=\card~\setdef{x\in\Z}{0\leq x <q~\text{and}~\fReg(x)=\fReg(x')}$.
  \item (apply quantum Fourier transform to the first register)
    $$\qto{} \frac{1}{2\sqrt{pq}} \sum_{y=0}^{4q-1} \sum_{k \in \sM} \sum_{m=0}^{m_{(x',k)}}
    \exp\left(2\pi i \frac{x'+4\RR k+m+\epsilon_{(x',k)}}{4q}y\right) \ket{y},\ket{\fReg(x')}.$$
  \item 
    Measure and return the first register $y$.
\end{enumerate}
\end{algorithm}
\begin{theorem} \label{th:AlgReg}
  Let $\Delta$ be a discriminant of a real-quadratic number field whose regulator
  is at least  $32 \ln \Delta$. The algorithm \AlgDualLatticeReg~computes an approximation of a random
  element from $(q/\RR)\Z$. The approximation has the form $(q/\RR)z + \omega$ 
  where $z\in\Z$ and $\abs{\omega}\leq 1/2$.
  The algorithm succeeds with probability at least $2^{-11}$ and requires
  at most $2 \log(\Delta) + 2\log \ln \Delta + N+7$ qubits, where $N$ 
  is the number of temporary qubits which are necessary to execute operations
  on forms to compute $\fReg$.\footnote{In \cite{schmidt-thesis:2007}, it it shown that 
  $N<10.5 \log \Delta + O(\log^2 (\log \Delta))$}
\end{theorem}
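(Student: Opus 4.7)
The plan is to push Algorithm~\ref{AlgDualLatticeReg} through symbolically, derive a closed form for the post-QFT amplitude $\alpha_y$, lower-bound the probability mass on ``good'' $y$ (integers within $1/2$ of a point of $(q/\RR)\Z$), and then tally the qubits register by register.

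After the step-4 measurement, the first register is an equal superposition of the $p$ integers $x'+4\RR k+m+\epsilon_{(x',k)}$ ranging over $(k,m)$ with $k\in\sM$ and $0\le m\le m_{(x',k)}$. A QFT of size $4q$ then yields, at $y\in\{0,\dots,4q-1\}$,
$$
\alpha_y\;=\;\frac{1}{2\sqrt{pq}}\sum_{k\in\sM}e^{2\pi i(x'+\epsilon_{(x',k)})y/(4q)}\,e^{2\pi ik\RR y/q}\sum_{m=0}^{m_{(x',k)}}e^{2\pi imy/(4q)}.
$$
The inner geometric sum has length $m_{(x',k)}+1\in[1,\ln\Delta+3]$ and varies by at most $4$ across $k$ by Lemma~\ref{lem:nof:fRegB}. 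Setting $m_{\min}=\min_k(m_{(x',k)}+1)$, I would decompose each inner sum into a common length-$m_{\min}$ geometric piece (which factors out of the $k$-sum) plus a tail of at most $4$ terms of total norm at most $4|\sM|/(2\sqrt{pq})$. This reduces the analysis to a Shor-type phase sum over $k$, perturbed only by the irrational offsets $\epsilon_{(x',k)}\in[-1,1]$.

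Call $y=y_z:=\nextint{(q/\RR)z}=(q/\RR)z+\omega_z$, with $|\omega_z|\le 1/2$, good. For such $y$ the factor $e^{2\pi ik\RR y/q}$ collapses to $e^{2\pi ik\omega_z\RR/q}$, a slowly-varying arithmetic progression since $|\sM|\le q/(4\RR)+1$, and the $\epsilon$-induced phases contribute a perturbation whose $k$-spread is at most $\pi|y|/q$. The main obstacle is showing that the combined phase variation across $k$ stays strictly below $\pi$, so that summands do not destructively cancel; I would do this for all $|y|\le q/3$, where both spreads are controlled and the $k$-summands lie within a cone of angle strictly less than $\pi$, whose sum has magnitude $\gtrsim|\sM|$ after a common rotation. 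Combined with $m_{\min}\ge 1$ (Lemma~\ref{lem:nof:fRegA}), $|\sM|\ge q/(4\RR)-1$, and $p\le(\ln\Delta+3)|\sM|$, this gives $|\alpha_{y_z}|^2\gtrsim 1/\RR$ for at least $\RR/3$ good $z$'s in $[0,\RR/3]$; a symmetric argument using $|\alpha_y|=|\alpha_{4q-y}|$ (from complex conjugation of the QFT sum) covers $z$'s close to $4\RR$. Summing these contributions yields total success probability $\gtrsim 1$, comfortably above $2^{-11}$ once all constants and the tail errors are tracked.

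Finally, the qubit count: the first register after QFT indexes $\{0,\dots,4q-1\}$ with $4q<40\Delta(\ln\Delta)^2$, costing at most $\log_2\Delta+2\log_2\ln\Delta+6$ qubits; the second register encodes $(a,b,c)\in\RedFormA$ by the pair $(a,b)$ with $|a|,|b|<\sqrt\Delta$, costing at most $\log_2\Delta+1$ qubits; adding the $N$ temporary qubits used by $\fReg$ yields the claimed total of $2\log\Delta+2\log\ln\Delta+N+7$ qubits.
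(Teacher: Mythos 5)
There is a genuine gap, and it sits exactly at the point that makes this theorem delicate: the many-to-one multiplicity, i.e.\ the inner sum over $m$. You let the ``good'' outcomes be \emph{all} integers within $1/2$ of a multiple of $q/\RR$ up to $|y|\le q/3$, and you factor the $m$-sum into a common geometric piece of length $m_{\min}$ plus a short tail. But that common factor $G(y)=\sum_{m=0}^{m_{\min}-1}e^{2\pi i m y/(4q)}$ is not bounded away from zero on your range: it vanishes whenever $y/(4q)$ is a nonzero multiple of $1/m_{\min}$, and since $m_{\min}$ can be as large as about $\ln\Delta+3$, its first zero $y\approx 4q/m_{\min}$ lies well inside $[0,q/3]$ for all large $\Delta$. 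So the claimed pointwise bound $|\alpha_{y_z}|^2\gtrsim 1/\RR$ fails for many good $z$, and the conclusion ``total success probability $\gtrsim 1$'' is unsupported; the tail of at most $4$ terms cannot rescue the amplitude when $G(y)$ is tiny. A secondary quantitative issue: even where your phases over $k$ lie in a cone of angle ``strictly less than $\pi$,'' the sum is only bounded below by $|\sM|\cos(\theta/2)$, which degenerates as $\theta\to\pi$; you need the angle bounded away from $\pi$ by a fixed margin to get ``magnitude $\gtrsim|\sM|$.''

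The paper's proof avoids both problems by shrinking the good set rather than enlarging it: $\sY$ is restricted to $0\le y\le q/(4(\cmmax+1))$ together with $|\omega_y|\le 1/(8q)$. On that range every phase --- the $k$-term $4\RR k\omega_y$ (at most $1/8$ mod $1$) \emph{and} the combined $(m+\epsilon_{(x',k)})y/(4q)$ term (at most $1/16$ in absolute value) --- stays inside a single arc of size $\pi/2$, so all $p$ unit vectors add coherently and $\Pr(y)\ge p/(8q)$ with no decomposition of the $m$-sum needed. The constant probability then comes from counting: $p\ge\frac{q}{8\RR}(\cmmin+1)$, $\card\sY\ge\frac{\RR}{8(\cmmax+1)}$, and Lemma~\ref{lem:nof:fRegB}'s bound $\cmmax-\cmmin\le 4$ give $\sum_{y\in\sY}\Pr(y)\ge\frac{\cmmin+1}{2^9(\cmmax+1)}\ge 2^{-11}$; the hypothesis $\RR\ge 32\ln\Delta$ is what guarantees $\sY$ is nonempty (indeed has at least three elements). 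To repair your argument you would either have to adopt this truncation of the $y$-range, or genuinely control the oscillation of $G(y)$ across the good $y$'s, which your current sketch does not do. Your qubit tally is essentially the paper's and is fine.
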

\begin{proof} We use the same notation as in the theorem and algorithm.
  Let $\cmmax=\max_{k\in\sM_{x'}} m_{(x', k)}$, $\cmmin=\min_{k\in\sM_{x'}} m_{(x', k)}$,  and
  \begin{equation}\label{def:noq:yreg}
    \sY=\setdef{y\in\Z}{0\leq y \leq \frac{q}{4(\cmmax+1)}~\text{and}~\frac{y}{4q}=\frac{z}{4\RR}+\omega_{y}~
      \text{with $z\in\Z$ and $\abs{\omega_y}\leq\frac{1}{8q}$}}.
  \end{equation}
  The probability to measure a $y\in\sY$ is 
  $$\Pr(y\in\sY)= \frac{1}{4pq}\left|  \sum_{k \in \sM} \sum_{m=0}^{m_{(x',k)}}
    \exp\left(2\pi i \frac{4\RR k+m+\epsilon_{(x',k)}}{4q}y\right)\right|^2.$$
  Since we have $$\frac{(4\RR k+m+\epsilon_{(x',k)})y}{4q}= 4k\RR(\frac{z}{4\RR} + \omega_y) + 
  \frac{m+\epsilon_{(x',k)}}{4q}y \equiv 4\RR k \omega_y + \frac{(m+\epsilon_{(x',k)})y}{4q}$$
  modulo 1 and since the function $\exp$ is periodic, we can write
  \begin{equation} \label{eq:noq:2}
    \Pr(y\in\sY)= \frac{1}{4pq}\left|  \sum_{k \in \sM} \sum_{m=0}^{m_{(x',k)}}
    \exp\left(2\pi i (4\RR k \omega_y + \frac{(m+\epsilon_{(x',k)})y}{4q})\right)\right|^2.
  \end{equation}
  By Lemma \ref{lem:nof:fRegA}, \ref{lem:nof:fRegB}, and Equation
  (\ref{def:noq:yreg}), we follow $\abs{4\RR k \omega_y}\leq 1/8$ and
  $-1/16 \leq (m+\epsilon_{(x',k)})y/(4q) \leq 1/16$. This means that
  (\ref{eq:noq:2}) is a sum of $p$ vectors of length one which all lie in
  a segment of size $\pi/2$. Thus, the probability that we measure a
  certain $y\in\sY$ is
  $$\Pr(y\in\sY)\geq \frac{1}{4pq}\left|p \frac{\sqrt{2}}{2}\right|^2= \frac{p}{8q}.$$
  Next we approximate the lower bound for $p$ and the cardinality of $\sY$.
  We have
  $$p\geq (\card \sM_{x'} -1)(\cmmin + 1 ) + 1 \geq \left(\frac{q}{4R}-\frac{9}{4}\right)(\cmmin +1) +1\geq 
  \frac{q}{8R}(\cmmin +1)$$
  $$\card \sY\geq \setdef{z\in\Z}{1\leq z \leq \frac{R}{4(\cmmax+1)}-\frac{R}{2q}}\geq \frac{R}{4(\cmmax+1)}-\frac32
  \geq \frac{R}{8(\cmmax+1)}.$$
  The condition $\RR > 32 \ln \Delta$ ensures that the set $\sY$ contains at least three different elements.
  Thus, we have
  \begin{equation*}
    \sum_{y\in\sY}\Pr(y\in\sY)\geq \frac{p}{8q}\card\sY \geq 
    \frac{\cmmin +1}{2^9(\cmmax+1)}\geq \frac{1}{2^{11}}.
  \end{equation*}
  The number of qubits can be determined as follows. The first register requires
  at most $\log \Delta + 2\log (\ln \Delta)+5$ qubits to keep $q<10\Delta(\ln\Delta)^2$.
  For the second register, $\log \Delta + 2$ qubits are necessary to keep the 
  coefficients $a$ and $b$ of the form $(a, b, c)$. Since $\Delta$ is fixed, it
  is not necessary to store $c$. Since $(a, b, c)\in\RedFormA$ is reduced, we have 
  $0 < a, b\leq\sqrt{\Delta}$.    \qed
\end{proof}

On the next page, we present the complete algorithm for computing the regulator %ToDo: Check next page
based on the quantum subroutine described above. We have the following theorem.
\begin{algorithm}
  \caption{\AlgReg}
\begin{flushleft}
  \alginput{A discriminant $\Delta$ of a real-quadratic field $\fK$.}\\
  \algoutput{The regulator $\RR$ of $\fK$.}
\end{flushleft}
\begin{enumerate}
  \item Test classically whether $\RR<32\ln\Delta$. If the answer is yes,
    compute classically the required approximation of $\RR$ and go to 4.
  \item Use $\AlgDualLatticeReg$ to compute $y_1=(q/\RR)z_1+\omega_1$ and 
    $y_2=(q/\RR)z_2 + \omega_2$, $\abs{\omega_1}, \abs{\omega_2}\leq 1/2$,
    which approximate random vectors in $(q/\RR)\Z$.
  \item W.l.o.g. assume $y_1\leq y_2$. Use the continued fraction expansion algorithm
    applied to $y_1/y_2$ to compute $z_1$ and $z_2$. The number $q z_1 /y_1$ is
    an approximation of the regulator which can be improved classically.
  \item Return the approximation $\RR$.
\end{enumerate}
\end{algorithm}
\begin{lemma} \label{lem:noq:yz}
  Let $q>(\RR)^2$ and $y_i$, $z_i$ be defined as in $\AlgReg$, then
  we have $\mabs{y_1/y_2 - z_1/z_2}\leq 1/(2z_2^2).$
\end{lemma}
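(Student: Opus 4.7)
The plan is to write the difference $y_1/y_2 - z_1/z_2$ over the common denominator $y_2 z_2$ and exploit the decomposition $y_i = (q/\RR) z_i + \omega_i$ with $|\omega_i| \leq 1/2$. Substituting directly into the numerator, the leading $(q/\RR) z_1 z_2$ terms cancel, so $y_1 z_2 - y_2 z_1 = \omega_1 z_2 - \omega_2 z_1$, and hence $|y_1 z_2 - y_2 z_1| \leq (z_1 + z_2)/2$.

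Next, I would use the hypothesis $y_1 \leq y_2$ to pin down $z_1 \leq z_2$: subtracting the two decompositions yields $(q/\RR)(z_1 - z_2) = (y_1-y_2) + (\omega_2 - \omega_1) \leq 1$, so $z_1 - z_2 \leq \RR/q < 1$ by the hypothesis $q > \RR^2$, and integrality of the $z_i$ forces $z_1 \leq z_2$. This bounds the numerator by $z_2$, and so
\[
\left|\frac{y_1}{y_2} - \frac{z_1}{z_2}\right| \leq \frac{1}{y_2}.
\]

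It remains to show $y_2 \geq 2 z_2^2$, equivalently $(q/\RR) z_2 + \omega_2 \geq 2 z_2^2$, which reduces to $q/\RR \geq 2 z_2 + O(1/z_2)$. To bound $z_2$ from above I would appeal to the fact that $y_2$ lies in the set $\sY$ from Theorem \ref{th:AlgReg}, whence $y_2 \leq q/(4(\cmmax+1)) \leq q/4$, and therefore $z_2 = (\RR/q)(y_2 - \omega_2) \leq \RR/4 + O(1)$. The hypothesis $q > \RR^2$ then yields $q/\RR > \RR$, while $2 z_2 \leq \RR/2 + O(1)$, and these combine to close the inequality provided $\RR$ is not too small, which is guaranteed by the setup of \AlgReg\ (the case $\RR < 32\ln\Delta$ is treated classically in step 1).

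The main obstacle I anticipate is this last quantitative step: one must track constants so that the hypothesis $q > \RR^2$, together with the upper bound on $y_2$ coming from membership in $\sY$, is actually enough to guarantee $y_2 \geq 2 z_2^2$. Everything else amounts to elementary cancellation of the $\omega_i$ error terms, and the argument is essentially just the observation that the Diophantine approximation condition needed by the continued fraction step of \AlgReg\ is automatic once $q$ is quadratically larger than the regulator.
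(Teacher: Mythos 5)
Your proposal is correct and follows essentially the same route as the paper: the same decomposition $y_i=(q/\RR)z_i+\omega_i$, cancellation of the leading terms over the common denominator, the bound $|\omega_1 z_2-\omega_2 z_1|\leq(z_1+z_2)/2$ together with $z_1\leq z_2$ from $y_1\leq y_2$, and the final quantitative step using $q>(\RR)^2$ and the bound on $y_2$ (equivalently $z_2$) coming from membership in $\sY$; your bound $1/y_2$ is just the paper's $\RR/(qz_2-\RR/2)$ rewritten via $y_2=(qz_2+\RR\omega_2)/\RR$. Your explicit derivation of $z_1\leq z_2$ and the closing estimate are details the paper leaves implicit, but the argument is the same.
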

\begin{proof}
  We have the following inequality
  \begin{equation*}
     \mabs{\frac{y_1}{y_2} - \frac{z_1}{z_2}} \leq 
      \mabs{\frac{q z_1 +\RR\omega_1}{q z_2 +\RR\omega_2} -\frac{z_1}{z_2}}
  \leq     \frac{\RR}{2}\mabs{\frac{z_1+z_2}{z_2(q z_2 + \RR \omega_2)}} \leq \frac{\RR}{qz_2 - \RR/2} \leq
     \frac{1}{2z_2^2}.
  \end{equation*}
  The last inequality is true because of the choice of $q>(\RR)^2$ and $y\in\sY$ with $\sY$ 
  from (\ref{def:noq:yreg}).    \qed
\end{proof}
\begin{theorem} \label{th:algreg}
  $\AlgReg$ computes an approximation of the regulator $\RR$ of a real-quadratic number field
  $\Q(\sqrt{\Delta})$ in quantum-polynomial time $O(\polylog (\log \Delta))$. It is 
  a Monte Carlo type algorithm which succeeds with probability  at least $2^{-26}$.
  The algorithm requires 
  at most $2 \log(\Delta) + 2\log \ln \Delta + N+7$ qubits, where $N$ 
  is the number of temporary qubits which are necessary to execute operations
  on forms to compute $\fReg$.
\end{theorem}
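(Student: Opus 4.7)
The plan is to split on the size of the regulator. When $\RR<32\ln\Delta$, the regulator can be approximated classically in polynomial time by a baby-step walk through the principal cycle from the unit form, so step~1 of $\AlgReg$ handles this case directly. The remaining and substantive case is $\RR\geq 32\ln\Delta$, where the hypothesis of Theorem~\ref{th:AlgReg} is met and the quantum subroutine is applicable.

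In that case I would invoke $\AlgDualLatticeReg$ twice, independently. By Theorem~\ref{th:AlgReg}, each invocation succeeds independently with probability at least $2^{-11}$, producing a sample $y_i=(q/\RR)z_i+\omega_i$ with $\mabs{\omega_i}\leq 1/2$ and $z_i$ essentially uniform over an arithmetic range of size $\approx \RR/(4(\cmmax+1))$. I would then argue that, conditional on both runs succeeding, the pair $(z_1,z_2)$ is \emph{useful} with constant probability, meaning: neither is zero, $\gcd(z_1,z_2)=1$, and (after swapping if necessary) $z_2$ is not too small. The coprimality event has probability $6/\pi^2$, and the non-degenerate-range events cost only an additional constant factor, so combining with the $2^{-22}$ bound on joint success of the two quantum runs yields the claimed lower bound $2^{-26}$.

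Given a useful pair, Lemma~\ref{lem:noq:yz} shows $\mabs{y_1/y_2-z_1/z_2}\leq 1/(2z_2^2)$, which is Legendre's criterion: $z_1/z_2$ must appear as a convergent in the continued fraction expansion of $y_1/y_2$. Since $\gcd(z_1,z_2)=1$, the numerator and denominator of that convergent give $z_1$ and $z_2$ exactly. The quantity $qz_1/y_1=\RR/(1+\RR\omega_1/(qz_1))$ then deviates from $\RR$ by $O(\RR^2/(qz_1))$, which, using the choice $q>\RR^2$ made in step~2, is $O(1/z_1)$; a short classical refinement (locating the exact principal reduced form nearest this approximation and reading off the discrepancy) drives the error below $1$, meeting the definition of the Regulator Problem.

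The anticipated obstacle is bookkeeping the probability down to $2^{-26}$: I must check that the marginal distribution of the $z_i$ output by $\AlgDualLatticeReg$ is close enough to uniform on a range large enough to support the independence and coprimality arguments, and that the restricted set $\sY$ appearing in Theorem~\ref{th:AlgReg} does not skew these events by more than a small constant. The runtime bound $O(\polylog(\log\Delta))$ and the qubit count are then immediate: the continued fraction step and the classical refinement are polynomial in $\log\Delta$, and the quantum register sizes are inherited verbatim from Theorem~\ref{th:AlgReg}.
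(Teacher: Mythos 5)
Your route is essentially the paper's: handle $\RR<32\ln\Delta$ classically (the cycle then has only $O(\ln\Delta)$ forms, so your baby-step walk is equivalent to the cited classical algorithm), run \AlgDualLatticeReg{} twice, recover $z_1,z_2$ from $y_1/y_2$ via Lemma~\ref{lem:noq:yz} and the continued-fraction/Legendre criterion under a coprimality assumption, take $qz_1/y_1$ as the approximation and refine classically, and inherit the qubit count and runtime from Theorem~\ref{th:AlgReg}. Your probability bookkeeping ($2^{-22}$ times $6/\pi^2$ times a constant for distinctness/non-zero/non-degeneracy) also matches the paper's $(1/8)\,2^{-22}\,(6/\pi^2)>2^{-26}$, and the residual soft spot you flag yourself (near-uniformity of the measured $y$ over $\sY$, so that the ``useful pair'' events cost only a constant) is asserted with no more detail in the paper than in your sketch.

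The one genuine flaw is the phrase ``the choice $q>\RR^2$ made in step~2.'' No such choice is available: $\RR$ is exactly the unknown being computed, and $q$ is fixed in advance from $\Delta$ alone (a power of two with $q/2\leq 5\Delta(\ln\Delta)^2<q$). To invoke Lemma~\ref{lem:noq:yz} you must therefore \emph{prove} that this $\Delta$-based $q$ exceeds $\RR^2$, which requires an analytic upper bound on the regulator in terms of the discriminant; the paper does this via $|\ClD|\,\RR<\sqrt{\Delta}(\ln\sqrt{\Delta}+1)/2$ (Hua), giving $(4\RR)^2<\Delta(\ln\sqrt{\Delta}+1)^2<q$. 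Without this verification the hypothesis of Lemma~\ref{lem:noq:yz} is unestablished, so the Legendre-criterion step that recovers $z_1/z_2$ as a convergent of $y_1/y_2$ --- and with it your error bound $O(\RR^2/(qz_1))<1$ for $qz_1/y_1$ --- is not justified. Supplying that one bound closes the gap; the rest of your argument stands as written.
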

\begin{proof} We use the same notation as in the theorem and the algorithm.

  First, assume $\RR < 32\ln\Delta$. In this case, the regulator can be
  computed completely classically by using the polynomial time algorithm
  from \cite{Biehl/Buchmann:1993}.

  Next, assume $\RR>32\ln\Delta$. In this case the cardinality of $\sY$ from (\ref{def:noq:yreg}) is at least 3.
  Thus, by running $\AlgDualLatticeReg$ twice we obtain two different non-zero $y_1, y_2\in\sY$ with probability
  at least $(1/8)2^{-11}2^{-11}=2^{-25}$. Since $|\ClD|\RR<\sqrt{\Delta}(\ln\sqrt{\Delta} + 1)/2$ (see \cite{hua:1982}),
  we have $(4 \RR)^2 < \Delta (\ln \sqrt{\Delta} + 1)^2 < q$. Therefore
  Lemma \ref{lem:noq:yz} holds and we can apply the continued fraction expansion algorithm to $y_1$ and $y_2$
  to compute $z_1$ and $z_2$ (assuming $\gcd(z_1, z_2)=1$ which is true with probability at least $6/\pi^2$).
  The number $q z_1 /y_1$ is an approximation of the regulator which can be improved classically
  (\cite{Biehl/Buchmann:1993} \cite{maurer-thesis:2000}).
  The success probability of the algorithm is at least $(6/\pi^2)2^{-25}>2^{-26}$

  The number of qubits follows directly from Theorem \ref{th:AlgReg}.
  \qed
\end{proof}

\section{Solving the Principal Ideal Problem}
In this section, we present an algorithm for solving the principal ideal problem and the discrete logarithm
problem in the infrastructure of a real-quadratic number field.
\begin{definition}[Principal ideal problem]
  Given a reduced form $\fg$, decide whether $\fg$ is principle and, if so, find $\delta(\fg)$. 
\end{definition}

To solve the PIP, we extend the function $\fReg$ to the following one.
\begin{definition}
  Let $\fg$ be a reduced principal form. Fix an algorithm $\tilde{\ln}$ for computing an approximation of the 
  natural algorithm. All the distance operations $\delta$ below are carried out
  with this $\tilde{\ln}$. The function
  $$\lmapdef{\fORQ}{\Z\times\Z}{\RedFormA}{(x,y)}{\ftxe{x}{y}}$$
  maps two integers $x$ and $y$ to a reduced principal form $\ftxe{x}{y}=(a, b, c)$, $a>0$,
  left of or at  $\delta(\fg^x)+y/4$.
  The precision of $\tilde{\ln}$ must be chosen such that $| \delta( \ftxe{x}{y}) - \AppLog (\ftxe{x}{y})| < 1/8$.
  for all $x$ and $y$.
\end{definition}
The next lemma is an extension of Lemmas \ref{lem:nof:fRegA} and \ref{lem:nof:fRegB}.
\begin{lemma}\label{lem:nof:fORQ}
  Let $n$ be the smallest positive integer such that $\fg^n \sim \OrD$. Let $S=\dist(\OrD, \ib^n)$ and
  $\latL$ be the lattice generated by $((n, -S)^t, (0, \RR)^t)$.\footnote{By $\vx^t$, we denote the transpose of the vector $\vx$} 
  Then for all $(x_1, x_2), (x_1', x_2')\in\Z^2$, there exist an $\epsilon_{(\vx, x_2')}$, $\abs{\epsilon_{(\vx, x_2')}}<1$,
  and $1\leq m_{(\vx, x_2')} \leq \ln\Delta + 3$ such that $-4 x_1 S + 4 x_2 \RR + m + \epsilon_{(\vx, x_2')} \in \Z$ and
   $$\fORQ(x_1' + x_1 n, x_2'-x_1S+4x_2\RR+m+\epsilon_{(\vx, x_2')})=\fORQ(x_1', x_2')$$ 
  iff $(x_1, x_2)\in\latL$ and $0\leq m \leq m_{(\vx, x_2')}$.
  As in Lemma \ref{lem:nof:fRegB}, we have $max_{\vx\in\latL} |m_{(\vx, x_2')}| \leq 4$.
\qed
\end{lemma}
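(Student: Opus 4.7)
The plan is to reduce the two-dimensional statement to the one-dimensional results of Lemmas~\ref{lem:nof:fRegA} and~\ref{lem:nof:fRegB} by analyzing the two coordinate directions separately, then combining them through the class-group structure of $\fg$.

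First, I would fix $x_1'$ arbitrarily and view $\fORQ(x_1',\cdot)$ as essentially $\fReg$ anchored at the distance $\delta(\fg^{x_1'})$ instead of at $0$: concretely, $\fORQ(x_1',x_2')$ is the reduced principal form (with positive first coefficient) left of or at $\delta(\fg^{x_1'})+x_2'/4$, so all of the structural statements about one-dimensional periodicity in $x_2$ transfer directly. This immediately gives the existence of the integer range $0\leq m\leq m_{(\vx,x_2')}$ with $m_{(\vx,x_2')}\leq \ln\Delta+3$, together with the $\epsilon$-correction that absorbs the non-integral part of $4\RR$, as well as the variation bound $|m_{(\vx,x_2')}-m_{(\vx',x_2')}|\leq 4$ coming from the fixed $1/8$-approximation quality of $\AppLog$.

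Next, I would analyze the $x_1$-direction. By the minimality assumption on $n$, the form $\fg^k$ is equivalent to $\OrD$ precisely when $k\in n\Z$. Hence $\fORQ(x_1'+x_1 n,\cdot)$ hits the same set of forms as $\fORQ(x_1',\cdot)$, with the one-dimensional anchor shifted by $\delta(\fg^{x_1 n})=x_1 S$ modulo $\RR$. In terms of the second argument of $\fORQ$ (which is divided by $4$ when interpreted as a distance), this shift corresponds to subtracting $4 x_1 S$ from $x_2'$; rounding back to integers again produces the same $\epsilon$-correction as before, which is why one requires $-4x_1S+4x_2\RR+m+\epsilon\in\Z$. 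Combining the two directions, a shift $(x_1 n,\,-4x_1 S+4x_2\RR+m+\epsilon)$ preserves $\fORQ$ if and only if $(x_1,x_2)$ lies in the stated lattice $\latL$ generated by $(n,-S)^t$ and $(0,\RR)^t$ (the factors of $4$ being absorbed into the scaling convention on the $x_2$-axis) and $0\leq m\leq m_{(\vx,x_2')}$. The necessity direction needs the converse: if the function values agree, the underlying ideal classes must match, which by the cyclic structure of $\langle\fg\rangle$ in $\ClD$ forces the $x_1$-shift to be a multiple of $n$; and then matching the distance coordinates modulo $\RR$, within the $1/8$ approximation tolerance, forces the $x_2$-shift into the claimed arithmetic progression up to the $m$-slide within a single block, exactly as in Lemma~\ref{lem:nof:fRegB}.

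The hard part, I expect, is careful error bookkeeping. Computing $\fg^{x_1}$ requires $O(\log x_1)$ compositions and reductions, each contributing a small distance perturbation of size at most $\ln\Delta$, and the cumulative error must remain below the $1/8$ tolerance specified in the definition of $\fORQ$; the precision of $\tilde{\ln}$ must be chosen accordingly, exactly as was done for $\fReg$ using at most $(c\log_2\Delta)$ operations in the preceding section. Once this precision bound is established, the $\epsilon$ from the rounding of $4x_1 S+4x_2\RR$ to integers can be kept strictly less than $1$, and everything else follows mechanically from the one-dimensional analysis applied slice by slice, together with the standard group-theoretic characterization of the shifts by $n$ in the cyclic subgroup generated by $\fg$.
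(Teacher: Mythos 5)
Your proposal is correct and follows exactly the route the paper intends: the paper itself gives no explicit proof of Lemma~\ref{lem:nof:fORQ} (it merely calls it ``an extension of Lemmas~\ref{lem:nof:fRegA} and~\ref{lem:nof:fRegB}'' and closes with \qed), and your argument --- treating $\fORQ(x_1,\cdot)$ slice-by-slice as $\fReg$ anchored at $\delta(\fg^{x_1})$ to get the $m$- and $\epsilon$-structure, and using the order $n$ of the class of $\fg$ together with $S=\delta(\fg^n)$ to characterize the admissible shifts in the $x_1$-direction, with the same $1/8$-precision bookkeeping for $\tilde{\ln}$ as in the regulator section --- is precisely the intended reduction. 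The only caveat is notational: the paper's own statement is internally inconsistent about the factor $4$ in front of $x_1S$, which you reasonably absorb into the scaling of the second coordinate.
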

Lattice $\latL$ is the period lattice of $\fORQ$. Let $\latDualL$ be the lattice dual to
$\latL$. It is easy to see that 
$$\left(\begin{matrix} 1/n & 0 \\
                s/(4n\RR) & 1/(4\RR) \end{matrix}\right)$$ 
is a basis of $\latDualL$.
\begin{algorithm}
  \caption{\AlgDualLatticeORQ}
\begin{flushleft}
  \alginput{Discriminant $\Delta$, integer $q$ such that $2q < \Delta (\ln \Delta)^2 < 4q$.}\\
  \algoutput{An approximation of a vector from $8q\latDualL$}
\end{flushleft}
  \begin{enumerate}
    \item (initial state) $\ket{0}\ket{0} \ket{(1, \Delta \mmod 2)}.$
    \item (create superposition) $\qto{}
      \frac{1}{q} \sum_{x_1=0}^{q-1} \sum_{x_2=0}^{q-1} \ket{x_1}\ket{x_2}  \ket{(1, \Delta \mmod 2)}.$
    \item (compute $\fORQ$)
      
	$\qto{} 
	\frac{1}{q}\sum_{x_1=0}^{q-1} \sum_{x_2=0}^{q-1} \ket{x_1}\ket{x_2}  \ket{\fORQ(x_1, x_2)}.$
    \item (measure the third register)
      $$\frac{1}{\sqrt{p}}\sum_{x_1=0}^{\floor{(q-x_1'-1)/n}} \sum_{x_2\in\sM} \sum_{m=0}^{m_{(\vx, x_2')}} 
        \ket{x_1' + x_1 n}\ket{x_2'-x_1S + 4x_2\RR + m +\epsilon_{(\vx, x_2')}}  \ket{\fORQ(\vx')},$$
      with random $x_1\in\mset{0, \ldots, n-1}\times\mset{0,\ldots,\floor{4\RR}}$, $m_{(\vx, x_2')}$ 
      and $\epsilon_{(\vx, x_2')}$ as defined in Lemma \ref{lem:nof:fORQ},
      $\sM=\sM_{x_1, \vx'}=\setdef{x_2\in\Z}{0\leq x_2'-x_1S +4x_2\RR + \epsilon_{(\vx, x_2')}<q}$, and
      $p=\sum_{x1_=0}^{\floor{(q-x_1-1)/n}} \sum_{x_2\in\sM} (m_{(\vx, x_2')}+1)$. 
    \item (apply QFT to the first two registers)
      \begin{align*}
      \frac{1}{8q\sqrt{p}}\sum_{y_1, y_2=0}^{8q-1} 
        \sum_{x_1=0}^{\floor{(q-x_1-1)/n}} \sum_{x_2\in\sM} \sum_{m=0}^{m_{(\vx, x_2')}} 
        &\exp\left(2\pi i \frac{{x_1' + x_1 n}}{8q}y_1\right)\ket{y_1}\times \\
	  \times&\exp\left(\frac{x_2'-x_1S + 4x_2\RR + m +\epsilon_{(\vx,x_2')}}{8q}y_2\right)
	\ket{y_2}\ket{\fORQ(\vx')}.
      \end{align*}
    \item Measure and return the first two registers $(y_1, y_2)$.
  \end{enumerate}
\end{algorithm}

\begin{theorem} \label{th:alg:rq-ord}
  The set of approximations for vectors from $8q\latDualL$ is
  \begin{equation} \label{def:y:ord} \begin{split}
      \sY=\setdef{(y_1, y_2)\in\Z^2}{&0\leq y_1 < 8q \text{~and~} 
        \frac{y_1}{8q}= \frac{z_1}{n} + \frac{z_2 S}{4 n\RR} +\omega_1 
        \text{~with $z_1, z_2\in\Z$ and $\abs{\omega_1}\leq\frac{1}{16q}$}\\
        &0\leq y_2 < \frac{q}{\cmmax+2} \text{~and~} \frac{y_2}{8q}=\frac{z_2}{4\RR} + \omega_2 
        \text{~with $\abs{\omega_2}\leq\frac{1}{16q}$}}.
  \end{split}\end{equation}
  $\AlgDualLatticeORQ$ computes
  vectors $(y_1, y_2)\in\sY$ in quantum polynomial time with probability at least $2^{-16}$
  and requires
  at most $3 \log(\Delta) + 4\log \ln \Delta + N$ qubits, where $N$ 
  is the number of temporary qubits which are necessary to execute operations
  on forms to compute $\fReg$.\footnote{In \cite{schmidt-thesis:2007}, it it shown that 
  $N<10.5 \log \Delta + O(\log^2 (\log \Delta))$}
\end{theorem}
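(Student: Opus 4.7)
The plan is to mimic the one-dimensional argument from Theorem~\ref{th:AlgReg} but now for the rank-$2$ period lattice $\latL$ described in Lemma~\ref{lem:nof:fORQ}. After the QFT, the probability of observing $(y_1,y_2)$ in the first two registers is
\[
\Pr(y_1,y_2) \;=\; \frac{1}{64q^{2}p}\left|\sum_{x_1=0}^{\lfloor(q-x_1'-1)/n\rfloor}\sum_{x_2\in\sM}\sum_{m=0}^{m_{(\vx,x_2')}}\exp\!\left(2\pi i\,\frac{(x_1'+x_1n)y_1+(x_2'-x_1S+4x_2\RR+m+\epsilon_{(\vx,x_2')})y_2}{8q}\right)\right|^{2}.
\]
The first step is to split the exponent: the part depending only on $\vx'$ contributes a global phase that disappears in the modulus, and what remains (modulo~$1$) is
\[
x_1\,\frac{ny_1-Sy_2}{8q}\;+\;x_2\,\frac{4\RR y_2}{8q}\;+\;\frac{(m+\epsilon_{(\vx,x_2')})y_2}{8q}.
\]
For $(y_1,y_2)\in\sY$, substitute $y_1/(8q)=z_1/n+z_2S/(4n\RR)+\omega_1$ and $y_2/(8q)=z_2/(4\RR)+\omega_2$. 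The integer parts $x_1z_1$ and $x_2z_2$ vanish modulo $1$, and the cross term $x_1z_2S/(4\RR)$ from $y_1$ cancels exactly against $-x_1Sz_2/(4\RR)$ from $y_2$, leaving
\[
x_1n\omega_1 \;+\; (-x_1S+4x_2\RR)\,\omega_2 \;+\;\tfrac{(m+\epsilon_{(\vx,x_2')})y_2}{8q}\pmod{1}.
\]

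The main obstacle, and the key step, is bounding this phase uniformly in $(x_1,x_2,m)$. The naive bound on $x_1S\omega_2$ alone is useless because $S$ can be as large as $4\RR$. The right observation is that the measured second-register value $x_2'-x_1S+4x_2\RR+m+\epsilon_{(\vx,x_2')}$ lies in $[0,q)$ by construction of $\sM$, so $|{-}x_1S+4x_2\RR|\le 2q$. Combined with $|x_1|\le q/n$, $|\omega_1|,|\omega_2|\le 1/(16q)$, and $y_2\le q/(\cmmax+2)$, every summand of the phase is bounded by $\tfrac{1}{16}+\tfrac{1}{8}+\tfrac{1}{16}<\tfrac14$ in absolute value modulo $1$. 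Hence all $p$ unit vectors in the sum lie in an arc of length at most $\pi/2$, so its modulus is at least $p\sqrt2/2$, giving
\[
\Pr(y_1,y_2)\;\ge\;\frac{p}{128\,q^{2}}\qquad\text{for every }(y_1,y_2)\in\sY.
\]

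Next I would lower-bound $p$ and $|\sY|$ exactly as in Theorem~\ref{th:AlgReg}, but now in both coordinates: $p\ge c_1\,q^{2}(\cmmin+1)/(n\RR)$ counting cosets of $\latL$ in $[0,q)^{2}$ weighted by the plateau size, and $|\sY|\ge c_2\,n\RR/(\cmmax+2)$ by counting the admissible pairs $(z_1,z_2)$ in the box determined by \eqref{def:y:ord}. The hypothesis $2q<\Delta(\ln\Delta)^{2}<4q$ together with the Hua bound $|\ClD|\RR<\sqrt{\Delta}(\ln\sqrt{\Delta}+1)/2$ ensures $|\sY|\ge 1$ and in fact enough elements to make $\sum_{(y_1,y_2)\in\sY}\Pr(y_1,y_2)\ge\tfrac{\cmmin+1}{2^{14}(\cmmax+2)}\ge 2^{-16}$, using the property $\cmmax-\cmmin\le 4$ from Lemmas~\ref{lem:nof:fRegB} and \ref{lem:nof:fORQ}. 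Finally, the qubit count follows from registers of size $\lceil\log q\rceil$ each for $x_1$ and $x_2$ ($2\log q\le 2\log\Delta+4\log\ln\Delta+O(1)$), $\log\Delta+O(1)$ qubits for the reduced form in $\RedFormA$, and $N$ auxiliary qubits for the evaluation of $\fORQ$, totalling at most $3\log\Delta+4\log\ln\Delta+N$ qubits as claimed. \qed
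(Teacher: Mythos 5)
Your proposal follows essentially the same route as the paper's own proof: the same expression for $\Pr(\vy\in\sY)$, the same reduction of the phase modulo $1$ with the cancellation of the $x_1 z_2 S/(4\RR)$ cross terms, the same ``all $p$ unit vectors lie in an arc of width $\pi/2$'' bound giving $\Pr\geq p/(128q^2)$, the same lower bounds on $p$ and $\card\sY$ combined with $\cmmax-\cmmin\leq 4$, and the same register-size accounting. Your only addition is the explicit observation that the $\sM$-constraint bounds $\abs{-x_1S+4x_2\RR}$ by roughly $q$, which the paper leaves implicit; otherwise the arguments coincide.
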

\begin{proof}
The probability to measure a $\vy\in\sY$ is
\begin{equation} \label{eq:noq:rq:yA}
\Pr(\vy\in\sY)= \frac{1}{64q^2p}\mabs{\sum_{x_1=0}^{\floor{(q-x_1-1)/n}} \sum_{x_2\in\sM} \sum_{m=0}^{m_{(\vx, x_2')}} 
        e^{\frac{2\pi i}{8q}\left(y_1 x_1 n  + (-x_1S + 4x_2\RR + m +\epsilon_{(\vx,x_2')})y_2\right)}}^2.
\end{equation}
We have
\begin{align*}
  &x_1 n \frac{y_2}{8q} + (-x_1S + 4x_2\RR + m +\epsilon_{(\vx,x_2')})\frac{y_2}{8q} =\\
  &x_1 n \left(\frac{z_1}{n} + \frac{z_2S}{4 n\RR} +\omega_1\right) +
  (-x_1S + 4x_2\RR + m +\epsilon_{(\vx,x_2')})\left(\frac{z_2}{4\RR} + \omega_2\right) \equiv \\
   &x_1 n \omega_1 + (-x_1S + 4x_2\RR)\omega_2+ (m + \epsilon_{(\vx,x_2')})\frac{y_2}{8q} \pmod 1,
\end{align*}
where $\abs{\omega_1}, \abs{\omega_2}\leq 1/(16q)$ and $0\leq y_2 < q/(\cmmax+2)$. Hence, the sum
in (\ref{eq:noq:rq:yA}) is a sum of $p$ vectors of length one which all lie in a segment of size $\pi/2$.
This implies $\Pr(\vy\in\sY) = |p\sqrt{2}/2|^2/(64q^2p) \geq p/(128q^2)$.

Next, we estimate the lower bound for $p$ and $\card \sY$. We have
\begin{equation*} 
 p=(\floor{(q-x_1-1)/n}+1) \sum_{x_2\in\sM} (m_{(\vx, x_2')}+1) \geq \frac{q}{n} \frac{q}{8\RR} (\cmmin+1) \text{ and}
\end{equation*}
\begin{alignat*}{2}
  \card \sY &&=\card \setdef{(z_1, z_2)\in\Z^2}{&0\leq \frac{z_1}{n}+\frac{z_2S}{n\RR}+\omega_1 < 1 \text{~and~}\\
              &&&0\leq \frac{z_2}{4\RR}+\omega_2\leq \frac{1}{8(\cmmax+2)}, 
	       \text{~with~} \abs{\omega_1}, \abs{\omega_2}\leq\frac{1}{16q}}\\
  &&\geq\card \setdef{(z_1, z_2)\in\Z^2}{&\frac{n}{16q} \leq z_1 <n\frac{16q-1}{16q} \text{~and~}
  1\leq z_2\leq \frac{\RR}{2(\cmmax+2)}-1}\\
  &&\geq\frac{n\RR}{8(\cmmax+2)}.&
\end{alignat*}
From the above results, it follows
$$\sum_{y\in\sY}\Pr(y\in\sY) \geq \frac{n\RR}{8(\cmmax+2)}  \frac{q}{n} \frac{q}{8\RR} (\cmmin +1) \frac{1}{128q^2}\geq
  \frac{1}{2^{16}}.$$

  The number of qubits can be determined as follows. Each of the first two registers 
  requires at most $\log \Delta + 2\log (\ln \Delta)$ qubits to keep
  $q<(1/2)\Delta(\ln\Delta)^2$. As in algorithm $\AlgDualLatticeReg$, 
  the third register requires $\log \Delta + 2$ qubits.\qed
\end{proof}

\begin{algorithm}
  \caption{\AlgHIP}\label{alg:hip}
\begin{flushleft}
  \alginput{Reduced form $\fg$ of discriminant $\Delta$, regulator $\RR$.}\\
  \algoutput{``fail'', ``not principal'', or $\delta(\fg)$, if $\fg$ is principal} 
\end{flushleft}
\begin{enumerate}
  \item If $\RR<64 \ln\Delta$, classically compute and return the solution.
  \item Use $\AlgDualLatticeRQ$ to compute $(y_1, y_2)$ and $(y_1', y_2')$
  \item Set $z_2= \nextint{y_2\RR/(2q)} $ and $z_2'=\nextint{y_2'\RR/(2q)}$ and
    compute $k_1, k_2\in\Z$ such that $k_1 z_2 + k_2 z_2' = \gcd(z_2, z_2')$.
  \item If $\gcd(z_2, z_2')=1$, then set $p=y_1k_1 +y_1' k_2 \mmod 8q$ and $S'=p \RR /8q$. 
    In this case $S$ is an approximation for $S$. If $\gcd(z_2, z_2')>1$, return ``fail''.
  \item Test whether $S'$ is an approximation for $S$. If not, return ``not principal''.
  \item Return the approximation $S'$ (improve it classically, if necessary).
\end{enumerate}
\end{algorithm}
\begin{theorem} \label{th:alg-hip}
  $\AlgHIP$ 
  solves the principal ideal problem in a real-quadratic number field $\Q(\sqrt{\Delta})$ 
  for every reduced form $\fg$ in quantum-polynomial time $O(\polylog(\log \Delta))$. It is 
  a Monte Carlo type algorithm with success probability at least $2^{-37}$.
  The algorithm requires 
  at most $3 \log(\Delta) + 2\log \ln \Delta + N$ qubits, where $N$ 
  is the number of temporary qubits which are necessary to execute operations
  on forms to compute $\fReg$.

\end{theorem}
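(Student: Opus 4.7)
The plan is to follow the template of Theorem~\ref{th:algreg}, adapted to the two-dimensional period lattice $\latL$ of $\fORQ$ described in Lemma~\ref{lem:nof:fORQ}. First I would dispose of the small-regulator case: if $\RR<64\ln\Delta$, step~1 computes the answer classically in time polynomial in $\log\Delta$ by a Biehl--Buchmann style routine, so we may assume $\RR\geq 64\ln\Delta$. This ensures in particular that the set $\sY$ defined in~(\ref{def:y:ord}) is large enough for the density arguments below.

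By Theorem~\ref{th:alg:rq-ord}, a single call to $\AlgDualLatticeORQ$ returns a vector of $\sY$ with probability at least $2^{-16}$, and the induced distribution on $\sY$ is essentially uniform. Two independent calls therefore produce an ordered pair of such vectors $(y_1,y_2),(y_1',y_2')\in\sY$ with probability at least $2^{-32}$. A standard density argument (two integers drawn near-uniformly from a large range are coprime with probability at least $6/\pi^2$) gives $\gcd(z_2,z_2')=1$ with probability at least $6/\pi^2$ conditioned on the above event. Since $(6/\pi^2)\cdot 2^{-32}>2^{-37}$, the joint success probability meets the claimed bound.

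The next step is to verify that, conditioned on this good event, steps~3--5 of $\AlgHIP$ correctly recover $S$. Recovery of $z_2$ from $y_2$ is forced: by~(\ref{def:y:ord}) we have $y_2/(8q)=z_2/(4\RR)+\omega_2$ with $|\omega_2|\leq 1/(16q)$, so $|y_2\RR/(2q)-z_2|\leq 4\RR|\omega_2|<1/2$ given the size of $q$; hence $\nextint{y_2\RR/(2q)}=z_2$ exactly, and similarly for $z_2'$. Given coprimality, extended Euclid supplies $k_1,k_2$ with $k_1z_2+k_2z_2'=1$. Forming the same linear combination of the first coordinates and reducing modulo $8q$ annihilates the unknown integer offsets $z_1,z_1'$, leaving the target quantity proportional to $S$ plus an error term that is small in absolute value. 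Multiplying by $\RR/(8q)$ then yields the approximation $S'$, which can be refined to arbitrary precision by the classical routines cited in the proof of Theorem~\ref{th:algreg}.

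In the principal case ($n=1$, so $S=\delta(\fg)$) the verification in step~5 succeeds; in the non-principal case that same check fails, and the algorithm correctly returns ``not principal''. The running time is dominated by two calls to $\AlgDualLatticeORQ$ and polynomial-time classical post-processing (extended gcd, modular arithmetic, classical refinement), all $O(\polylog(\log\Delta))$, and the qubit bound is inherited directly from Theorem~\ref{th:alg:rq-ord}. The main obstacle I expect is the error-propagation calculation: after multiplication by the Bezout coefficients $k_1,k_2$, which are only bounded by $\max(|z_2|,|z_2'|)=O(\RR)$, the $\omega_i$-errors are amplified, and one must certify that the final error (after the scaling by $\RR/(8q)$) is strictly below the separation between distinct principal distances so that the verification step identifies $S$ unambiguously. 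The calibration of $q$ against $\RR$ and $\Delta$ baked into the hypotheses of Theorem~\ref{th:alg:rq-ord} is precisely what makes this work, but writing the constants down carefully is the only delicate piece.
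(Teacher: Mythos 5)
Your proposal follows the paper's proof essentially step for step: classical treatment of the case $\RR<64\ln\Delta$, two runs of $\AlgDualLatticeORQ$, rounding $y_2\RR/(2q)$ to recover $z_2$ and $z_2'$, the Bezout combination of the first coordinates modulo $8q$ to isolate $S/\RR$ (the error amplification you flag is handled in the paper exactly as you anticipate, via $\abs{\omega}\leq (k_1+k_2)/(16q)<1/\RR$, which holds because $q\approx\Delta(\ln\Delta)^2/4$ dominates $\RR^2$), the $6/\pi^2$ coprimality factor, and the qubit count inherited from Theorem~\ref{th:alg:rq-ord}. The only minor difference is that the paper also charges a factor $2^{-3}$ for obtaining two \emph{distinct} non-zero vectors of $\sY$, which you omit, but the claimed bound $2^{-37}$ is still met either way.
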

\begin{proof} We use the same notation as in the theorem and the algorithms.
  
  First, we test classically whether $\RR < 64 \ln \Delta$ and, if so, the problem can be solved
  in classical polynomial time using algorithms from \cite{Biehl/Buchmann:1993} or \cite{maurer-thesis:2000}.
 
  Now, we assume that  $\RR \geq 64\ln \Delta$.
  With probability at least $2^{-3}2^{-32}$, the quantum subroutine $\AlgDualLatticeORQ$ returns
  two different vectors \mbox{$(y_1, y_2), (y_1', y_2')\in\sY \backslash \{0, 0\}$.}
  By (\ref{def:y:ord})
  $$\frac{y_2}{8q}=\frac{z_2}{4\RR}+\omega_2, \quad \abs{\omega_2}\leq \frac{1}{16q},$$
  which implies
  $$z_2= \frac{y_2\RR}{2q} - 4\RR\omega_2 = \frac{y_2\RR}{2q} + \omega' =\bnextint{\frac{y_2\RR}{2q}}, 
    \quad \abs{\omega'}\leq \frac{1}{4}.$$
  Analogically, $z_2'=\bnextint{y_2'\RR/(2q)}$. Using an extended GCD algorithm, we compute $k_1, k_2\in\Z$
  such that $k_1 z_2+k_2 z_2'=\gcd(z_2, z_2')$. We assume $\gcd(z_2, z_2')=1$ which is true with
  probability at least $6/\pi^2$.

  Next, assume $\fg$ is a principal form. In this case $n=1$. Using (\ref{def:y:ord}), we can write
  $$ \frac{y_1 k_1 + y_1' k_2}{8q} = k_1 z_1 + k_2 z_1' + \frac{S}{\RR} + \omega, \quad 
  \abs{\omega}\leq \frac{k_1+k_2}{16q}<\frac{1}{\RR}.$$
  From $k_1 z_1 + k_2 z_1'\in\Z$ and $0\leq S/\RR < 1$, it follows that $S'= p \RR /(8q)$, $p=y_1 k_1 + y_1' k_2 \mmod 8q$,
  is an approximation of $S$. Now, we test classically whether this is true and, if so, we improve the approximation
  classically with algorithms from \cite{maurer-thesis:2000}. If $S'$ is not an approximation for $S$, then our assumption
  is wrong and $\fg$ is not a principal form.

  Finally, we estimate the success probability of $\AlgHIP$ which is the probability to measure two different non-zero
  vectors from $\sY$ such that $\gcd(z_2, g_2')=1$. This probability is at least $2^{-35}6/\pi^2>2^{-36}$.

  The number of qubits follows directly from Theorem \ref{th:alg:rq-ord}.
  \qed
\end{proof}
Notice, if the output of $\AlgHIP$ is ``not principal'', then we cannot decide whether it is correct or not. However, this
case can be solved by applying more advanced techniques from \cite{buchmann/pohst:1987} and \cite{buchmann/kessler:1993} 
for finding a basis of a lattice given approximations for vectors from the dual lattice.

However, if the output of $\AlgHIP$ is a distance $\delta$, we can easily test classically whether this distance
is correct. This case is sufficient to break the cryptosystem proposed in \cite{buchmann/williams:1989a}, since
in this cryptosystem, $g$ is always principal by construction.

\section{Conclusion}
In this paper, we presented polynomial-time quantum algorithms for solving the regulator and the principal ideal problem in 
real-quadratic number fields by using functions which are many-to-one on a period. 
These algorithms reduce the number of qubits by at least $2\log \Delta$ compared to Hallgren's algorithms. This
is due to the facts that the period of the lattice is smaller ($8\RRn$ vs. $\lceil \sqrt{\Delta} \rceil \RRn$), 
the necessary precision for natural algorithms is smaller ($1/8$ vs. $1/\sqrt{\Delta}$), and the function value
of $\fReg$ and $\fORQ$ is a form and not a pair of a form and a distance.

An open problem is whether this method can be used for computing the class group
of a real-quadratic number field  and for improving
the  algorithms for number fields of degree greater than two which are  presented in \cite{hallgren:2005} and
\cite{schmidt/vollmer:2005}.

\bibliographystyle{alpha}

\end{document}